\journal{Journal}
\newcommand{\pref}[1]{\prettyref{#1}}
\newcommand{\savehyperref}[2]{\texorpdfstring{\hyperref[#1]{#2}}{#2}}
\newcommand{\calD}{\mathcal{D}}
\newcommand{\gray}[1]{{\color{gray}#1}}
\newtheorem*{theorem*}{Theorem}
\newtheorem{assumption}{Assumption}[section]
\newtheorem*{lemma*}{Lemma}
\newtheorem{definition}{Definition}[section]
\theoremstyle{definition}
\newcommand*{\circled}[1]{\lower.7ex\hbox{\tikz\draw (0pt, 0pt)%
    circle (.5em) node {\makebox[1em][c]{\small #1}};}}
\newtheorem{thm}{Theorem}[section]
\newtheorem{lem}[thm]{Lemma}
\theoremstyle{definition}
\begin{document}

\begin{frontmatter}




\title{Theoretical Analysis of Privacy Leakage in Trustworthy Federated Learning: A Perspective from Linear Algebra and Optimization Theory}



\author{
  Xiaojin Zhang\textsuperscript{\rm 1} Wei Chen\textsuperscript{\rm 1}\\
  \textsuperscript{\rm 1} Huazhong University of Science and Technology\\ 
  }


\begin{abstract}
Federated learning has emerged as a promising paradigm for collaborative model training while preserving data privacy. However, recent studies have shown that it is vulnerable to various privacy attacks, such as data reconstruction attacks. In this paper, we provide a theoretical analysis of privacy leakage in federated learning from two perspectives: linear algebra and optimization theory. From the linear algebra perspective, we prove that when the Jacobian matrix of the batch data is not full rank, there exist different batches of data that produce the same model update, thereby ensuring a level of privacy. We derive a sufficient condition on the batch size to prevent data reconstruction attacks. From the optimization theory perspective, we establish an upper bound on the privacy leakage in terms of the batch size, the distortion extent, and several other factors. Our analysis provides insights into the relationship between privacy leakage and various aspects of federated learning, offering a theoretical foundation for designing privacy-preserving federated learning algorithms.
\end{abstract}



\end{frontmatter}

\section{Introduction}
Federated learning \citep{mcmahan2017communication,yang2019federated} has gained significant attention in recent years as a distributed machine learning paradigm that enables multiple parties to collaboratively train a model without sharing their raw data. By keeping the data locally and only exchanging model updates, federated learning mitigates privacy concerns and complies with data protection regulations. It has found applications in various domains, such as mobile computing \citep{hard2018federated}, healthcare \citep{antunes2022federated}, and finance \citep{long2020federated}.

Despite the promise of federated learning in protecting data privacy, recent studies have revealed that it is still susceptible to privacy attacks. Adversaries can exploit the shared model updates to infer sensitive information about the participants' private data. One notable class of attacks is data reconstruction attacks \citep{zhu2019deep,geiping2020inverting,yin2021see}, which aim to recover the original training data from the gradients or model updates. These attacks pose a severe threat to the privacy of federated learning participants and undermine the trust in the system.

To better understand and mitigate privacy risks in federated learning, it is crucial to conduct a rigorous theoretical analysis of privacy leakage. Previous works have investigated the privacy guarantees of federated learning from different perspectives, such as differential privacy \citep{dwork2006calibrating,abadi2016deep} and information theory \citep{wang2019beyond}. However, the theoretical understanding of privacy leakage in federated learning remains limited, especially in terms of the impact of the specific characteristics of federated learning in the local training process, such as the number of local data samples, the number of local epochs, and the batch size. 

In this paper, we aim to fill this gap by providing a theoretical analysis of privacy leakage in federated learning from two complementary perspectives: linear algebra and optimization theory. From the linear algebra perspective, we formulate the local training process as an optimization problem and examine the uniqueness of its solution. We prove that when the Jacobian matrix of the batch data is not full rank, there exist different batches of data that produce the same model update, thereby ensuring a level of privacy. We further derive a sufficient condition on the batch size to prevent data reconstruction attacks. From the optimization theory perspective, we measure the privacy leakage using the discrepancy between the reconstructed data and the original data, and establish an upper bound on the privacy leakage in terms of the batch size, the distortion extent, and several other factors. Our analysis provides insights into the relationship between privacy leakage and various aspects of federated learning, such as the number of local data samples, the number of local epochs, and the batch size. Our main contributions are as follows:
\begin{itemize}
\item We formulate the local training process in federated learning as an optimization problem and analyze the uniqueness of its solution from the linear algebra perspective. We prove that when the Jacobian matrix of the batch data is not full rank, there exist different batches of data that produce the same model update, ensuring a level of privacy.
\item We derive a sufficient condition on the batch size to prevent data reconstruction attacks (\pref{thm: not_uniquely_determined} and \pref{thm: quantitative_dbp}), providing a guideline for setting the batch size in federated learning to enhance privacy protection.
\item We measure the privacy leakage using the discrepancy between the reconstructed data and the original data, and introduce the concept of distortion extent to quantify the impact of parameter perturbation on privacy protection. We establish a connection between the privacy leakage, the batch size, the distortion extent, and other factors such as the number of local data samples and the number of local epochs. This reveals the relationship between privacy leakage and various aspects of federated learning (\pref{thm:privacy_leakage_upper_bound}).
\end{itemize}

The rest of the paper is organized as follows. Section \ref{sec:related_work} reviews the related work on federated learning, privacy attacks, and defense mechanisms. Section \ref{sec:preliminaries} introduces the preliminaries and notations used in the paper. Section \ref{sec:linear_equations} presents the theoretical analysis from the perspective of linear algebra, deriving conditions for preventing data reconstruction attacks. Section \ref{sec:average_case_optimization} provides the theoretical analysis from the perspective of  optimization, establishing upper bounds on privacy leakage. Finally, Section \ref{sec:conclusion} concludes the paper and outlines future research directions.

This work sheds light on the theoretical aspects of privacy leakage in federated learning and provides insights into the design of privacy-preserving federated learning algorithms. By understanding the impact of different factors on privacy leakage, we can develop more effective defense strategies and build more secure and trustworthy federated learning systems.

\section{Related Work}
\label{sec:related_work}
In this section, we review the related work on federated learning, privacy attacks, and defense mechanisms.
\subsection{Federated Learning}
Federated learning has attracted significant attention in recent years as a promising approach for collaboratively training machine learning models while preserving data privacy. The concept of federated learning was first proposed by \citet{mcmahan2017communication}, who introduced the FedAvg algorithm for aggregating local models. Since then, various extensions and improvements have been proposed, such as FedProx \citep{li2020federated}, and FedNova \citep{wang2020tackling}. Furthermore, the concept of trustworthy federated learning has emerged, focusing on ensuring high utility, fairness, robustness, and efficiency, while preserving privacy. Researchers have explored the trade-offs between these factors and proposed various techniques to achieve a balance \cite{girgis2021shuffled, zhang2022no, mitchell2022optimizing, zhang2023towards, zhang2023trading, zhang2023meta, zhang2023game, he2024reinforcement, zhang2024deciphering}. Federated learning represents a promising direction for collaborative model training that can benefit a wide range of applications, from healthcare to finance, while addressing the growing concerns around data privacy and security.

\subsection{Privacy Attacks in Federated Learning}
Despite the privacy-preserving nature of federated learning, it has been shown that the shared model updates can still leak sensitive information about the participants' private data. \citet{zhu2019deep} proposed the Deep Leakage from Gradients (DLG) attack, which reconstructs the training data from the gradients by solving an optimization problem. \citet{geiping2020inverting} developed a cosine similarity-based attack called Inverting Gradients (IG) that achieves better reconstruction quality. \citet{yin2021see} introduced the Recursive Gradient Inversion (RGI) attack, which recursively reconstructs the private data from the model updates in multiple rounds.
Other notable privacy attacks in federated learning include membership inference attacks \citep{nasr2019comprehensive}, property inference attacks \citep{melis2019exploiting}, and model inversion attacks \citep{fredrikson2015model}.
\subsection{Defense Mechanisms}
To mitigate privacy risks in federated learning, various defense mechanisms have been proposed. Differential privacy \citep{dwork2006calibrating} is a well-established framework for protecting individual privacy by adding noise to the shared information. \citet{abadi2016deep} and \citet{mcmahan2017communication} applied differential privacy to federated learning by perturbing the gradients before aggregation. Secure aggregation \citep{bonawitz2016practical} is another approach that uses cryptographic techniques to ensure that the server can only see the aggregated model update without learning individual updates. Gradient compression \citep{haddadpour2021federated, albasyoni2020optimal} reduces the communication overhead and protects privacy by compressing the gradients before transmission.
Other defense mechanisms include participant-level differential privacy \citep{geyer2017differentially}, model perturbation \citep{wu2020value}, and gradient sparsification \citep{han2020adaptive}.

\section{Preliminaries}
\label{sec:preliminaries}
In this section, we introduce the basic concepts and notations used throughout the paper. Table \ref{tab:notation} summarizes the key notations.
\begin{table}[h]
\centering
\caption{Notation Table}
\label{tab:notation}
\begin{tabular}{c|l}
\toprule
Notation & Description \\
\midrule
$K$ & Number of clients \\
$\mathcal{D}^{k}$ & Local dataset of client $k$ \\
$n^k$ & Number of data samples in $\mathcal{D}^{k}$ \\
$\mathbf{x}_i^{k}, y_i^{k}$ & The $i$-th data sample and label of client $k$ \\
$\theta_t$ & Global model at round $t$ \\
$\theta_t^k$ & Local model of client $k$ at round $t$ \\
$\Delta \theta_{t}^k$ & Local model update of client $k$ at round $t$ \\
$\ell^{k}$ & Local objective function of client $k$ \\
$T$ & Number of communication rounds \\
$E$ & Number of local epochs \\
$B$ & Local batch size \\
$\eta$ & Learning rate \\
$\epsilon_p^{k}$ & Privacy leakage of client $k$ \\
$\Delta$ & Distortion extent \\
\bottomrule
\end{tabular}
\end{table}

\subsection{Federated Learning}
Federated learning \citep{mcmahan2017communication,yang2019federated} is a distributed machine learning paradigm that enables multiple parties to collaboratively train a model without sharing their raw data. In this paper, we focus on horizontal federated learning (HFL), where the participants have different data samples but share the same feature space and is the most widely applied federated learning setting in real-world applications.

Consider a horizontal federated learning system with $K$ clients, where each client $k$ has a private dataset $\calD^{k} = \{\textbf{x}^{k}, y^{k}\} = \{(\textbf{x}_1^{k},y_1^{k}),\cdots, (\textbf{x}_{n^k}^{k},y_{n^k}^{k})\}$, and $n^k = |\mathcal{D}^{k}|$. In HFL, $K$ participants collaboratively optimize a global model with parameter $\theta$ by minimizing clients' local losses. The goal is to minimize the global objective function:
\begin{equation}\label{equ:fl}
\min_{\theta} \ell(\theta) \triangleq \frac{1}{K} \sum_{k=1}^K \ell^{k}(\theta),
\end{equation}
where $\ell^{k}(\theta) = \frac{1}{n^k} \sum_{i=1}^{n^k} \ell(\theta, \mathbf{x}_i^{k}, y_i^{k})$ is the local objective function of client $k$, and $\ell(\cdot)$ is the loss function.

The widely adopted algorithm for solving the above optimization problem is Federated Averaging (FedAvg) \citep{mcmahan2017communication}. In each communication round $t$, the server sends the current global model $\theta_t$ to all clients. Each client $k$ then performs $E$ local epochs of training on its local dataset $\mathcal{D}^{k}$ to update the model parameters to $\theta_t^k$. The local model update $\Delta \theta_{t}^k = \theta_t^k - \theta_t$ is sent back to the server, which aggregates the updates to obtain the new global model:
\begin{equation}
\theta_{t+1} = \theta_t + \frac{1}{K} \sum_{k=1}^K \Delta \theta_{t}^k.
\end{equation}
\subsection{Privacy Attacks in Federated Learning}
Despite the promise of federated learning in protecting data privacy, recent studies have shown that it is vulnerable to various privacy attacks. In this paper, we focus on data reconstruction attacks \citep{zhu2019deep,geiping2020inverting,yin2021see}, which aim to recover the original training data from the shared model updates.
We consider a semi-honest adversary who follows the protocol faithfully but tries to infer sensitive information from the received messages. The adversary's goal is to reconstruct the private data of a target client $k$ by solving the following optimization problem:
\begin{equation}
\min_{\tilde{\mathbf{x}}, \tilde{y}} \textsc{Dist}\left(\textsc{Grad}(\theta_t, \tilde{\mathbf{x}}, \tilde{y}), \Delta \theta_{t}^k\right),
\end{equation}
where $\tilde{\mathbf{x}}$ and $\tilde{y}$ are the reconstructed data, $\textsc{Dist}(\cdot)$ is a distance metric, and $\textsc{Grad}(\cdot)$ simulates the local training process to approximate the real model update $\Delta \theta_{t}^k$.

\section{Theoretical Analysis from the Perspective of Linear Algebra}\label{sec:linear_equations}

In this section, we analyze the privacy of federated learning from the perspective of linear algebra. By formulating the local training process as an optimization problem and examining the uniqueness of its solution, we establish a theoretical basis for understanding how the model update relates to the identifiability of private data. Central to our analysis is the Jacobian matrix, which captures the sensitivity of the model update to changes in the input data. We prove that when the Jacobian matrix is not full rank, there exist different batches of data that produce the same model update, thereby ensuring a level of privacy. Building upon this result, we derive a sufficient condition on the batch size to prevent data reconstruction attacks. This linear algebra based approach provides a rigorous and quantitative framework to reason about privacy in federated learning.

To solve the optimization problem of Eq.(\ref{equ:fl}). federated learning \citep {mcmahan2017communication} involves $T$ iterations (i.e., communication rounds) of training procedure between the server and $K$ clients (illustrated in Algorithm \ref{server_alg}). In each iteration $t$, the server sends the current version of the global model $\theta_t^g$ to all clients. Each client $k$ then computes multiple updates on its local model based on its private data and sends an updated version of the local model $\theta_{t}^k$ back to the server (see Algorithm \ref{client_update_alg}), which in turn aggregates the local models of all clients to form the next version of the global model. For simplicity of algorithm description, we assume the same learning rate $\eta$, batch size $B$, and number of training epochs $E$ are used for all clients, even though in practice these hyperparameters may be different across clients to account for their varying data distributions and computational capabilities.

\begin{algorithm}[!ht] 
\caption{FedAvg}
\label{server_alg}
\begin{algorithmic}[1]
\Statex \textbf{Input:} The total number of clients $K$; communication round $T$; learning rate $\eta$; batch size $B$; training epoch $E$.

\Statex \textbf{Initialize} global model parameter $\theta_0^g$

\For{round $t \in \{1,...,T\}$}
\For{each client $k$}

$\theta^k_{t+1} \leftarrow$ ClientUpdate $(\calD^k, \theta_t^g, \eta, B, E)$ \gray{$\triangleright$ \textit{call Algorithm \ref{client_update_alg}}}

\EndFor

\State $\theta_{t+1}^g \leftarrow \frac{|\calD^k|}{N}\sum_{k=1}^K\theta^k_{t+1}$ 

\EndFor

\end{algorithmic}
\end{algorithm}

\begin{algorithm}[!ht] 
\caption{ClientUpdate}
\label{client_update_alg}
\begin{algorithmic}[1]

\Statex \textbf{Input:} 
client $k$'s dataset $\calD^k = \{\mathbf{x}^k, y^k\}$;
global model $\theta^g$; learning rate $\eta$; batch size $B$; training epoch $E$.
\State $m \leftarrow \left\lceil \frac{|\mathcal{D}^k|}{B} \right\rceil$
\State $\theta_{0,0}^k \leftarrow \theta^g$
\For{epoch $e \in \{1,...,E\}$}
    \State $\theta_{e,0}^k \leftarrow \theta_{e-1, m}^k$
    \State $\{\mathbf{x}_{e,b}^k, y_{e,b}^k\}_{b = 1}^{B} \leftarrow$ \text{PartitionData}($\mathbf{x}^k, y^k, B$)
    \For {batch $b \in \{1,...,m\}$}

    \State $\theta_{e,b}^k \leftarrow \theta_{e,b-1}^k - \eta \nabla _{\theta} \ell(\theta_{e,b-1}^k, \mathbf{x}_{e,b}^k, y_{e,b}^k) $
    \EndFor
\EndFor
\State Send $\theta_{E,m}^k$ to server
\end{algorithmic}
\end{algorithm}

In the FedAvg algorithm (Algorithm \ref{server_alg}), the server initializes the global model $\theta_0$. Then, for each communication round $t$, the server sends the current global model $\theta_t^g$ to each client $k$. Each client updates its local model using the ClientUpdate algorithm (Algorithm \ref{client_update_alg}) and sends the updated model $\theta^k_{t+1}$ back to the server. The server then aggregates the updated models from all clients by taking a weighted average based on the size of each client's local dataset, producing the new global model $\theta_{t}^g$.
In the ClientUpdate algorithm (Algorithm \ref{client_update_alg}), each client $k$ receives the global model $\theta^g$ from the server. Let $\theta_{e,b}^k$ represent the model parameters of client $k$ at epoch $e$ and batch $b$. The client initializes its local model $\theta_{0,0}^k$ with $\theta^g$. The client then performs $E$ local epochs of training. In each epoch $e$, the client partitions its local dataset $\{\mathbf{x}^k, y^k\}$ into batches of size $B$. For each batch $b$, the client updates its local model $\theta_{e,b}^k$ using the gradient of the loss function $\ell$ with respect to the model parameters, using a learning rate of $\eta$. After completing all local epochs, the client sends its updated model $\theta_{E,m}^k$ back to the server.

In round $t$ of training, the server distributes the global model parameters $\theta_t$ to all clients. Client $k$ randomly samples a batch of data of size $B$ from $\mathcal{D}^k$, denoted as $\{ \mathbf{x}_b^k, y_b^k \}_{b=1}^B$, and performs $E$ local epochs of training to update the model parameters to $\theta_{t}^k$. The model update $\Delta \theta_{t}^k = \theta_{t}^k - \theta_t$ is then uploaded to the server. The server aggregates all updates to obtain the new global model:
\begin{equation}
\theta_{t} = \theta_t + \frac{1}{K}\sum_{k=1}^K \Delta \theta_{t}^k.
\end{equation}

The local training process of client $k$ can be formulated as:
\begin{equation*}
\begin{aligned}
\min_{\{\mathbf{x}_b^k,y_b^k\}_{b=1}^B} \quad & \frac{1}{B}\sum_{b=1}^B\ell(\theta_{t,e}^k, \mathbf{x}_b^k, y_b^k) \\
\textrm{s.t.} \quad & \sum_{e=0}^{E-1} \nabla_\theta\left(\frac{1}{B}\sum_{b=1}^B\ell(\theta_{t,e}^k, \mathbf{x}_b^k, y_b^k)\right) = \Delta \theta_{t}^k \\
& \theta_{t,e+1}^k = \theta_{t,e}^k - \frac{\eta}{B} \sum_{b=1}^B\nabla_\theta\ell(\theta_{t,e}^k, \mathbf{x}_b^k, y_b^k), \quad e=0,\ldots,E-1 \\
& \theta_{t,0}^k = \theta_t
\end{aligned}
\end{equation*}
where $\theta_{t,e}^k$ represents the local model parameters after the $e$-th epoch in round $t$, and $\eta$ is the learning rate.

In the above problem, the private data $(\mathbf{x}^k, y^k)$ is the optimization variable, and the constraint requires that the model update resulting from the data after $E$ local epochs should be equal to the observed $\Delta \theta_{t}^k$. If there are multiple solutions to the constraint, i.e., there exist different $(\mathbf{x}^k, y^k)$ such that the constraint holds, then the private data cannot be uniquely determined from $\Delta \theta_{t}^k$. Then, we can get the following theorem.

\begin{thm}\label{thm: not_uniquely_determined}
Let $d$ be the model parameter dimension and $p$ be the dimension of a single data point. Consider a batch of data $\{ \mathbf{x}_b^k, y_b^k \}_{b=1}^B$ with $\mathbf{x}_b^k \in \mathbb{R}^p$ and $y_b^k \in \mathbb{R}$. Let $\Delta \theta_{t}^k(\{ \mathbf{x}_b^k, y_b^k \}_{b=1}^B)$ represent the model update obtained after $E$ local epochs on this batch of data. If the rank of the Jacobian matrix $\mathbf{J} \in \mathbb{R}^{d \times Bp}$ of the batch data is $\mathrm{rank}(\mathbf{J}) < Bp$, then for any $\{ \mathbf{x}_b^k, y_b^k \}_{b=1}^B$, there exists $\{ \delta \mathbf{x}_b^k \}_{b=1}^B \neq \{ \mathbf{0} \}_{b=1}^B$ such that,
\begin{equation*}
\Delta \theta_{t}^k(\{ \mathbf{x}_b^k + \delta \mathbf{x}_b^k, y_b^k \}_{b=1}^B) = \Delta \theta_{t}^k(\{ \mathbf{x}_b^k, y_b^k \}_{b=1}^B).
\end{equation*}
This means that the private batch data $\{ \mathbf{x}_b^k, y_b^k \}_{b=1}^B$ cannot be uniquely determined from the model update $\Delta \theta_{t}^k$.
\end{thm}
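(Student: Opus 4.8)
The plan is to show that the model update map, viewed as a function of the batch data $\{\mathbf{x}_b^k\}_{b=1}^B$, has a nontrivial kernel direction along which it is (to first order) constant, and then to promote this first-order statement into an exact equality. First I would fix the labels $\{y_b^k\}_{b=1}^B$ and regard $\Delta\theta_t^k$ as a differentiable map $F:\mathbb{R}^{Bp}\to\mathbb{R}^d$ sending the stacked feature vector $\mathbf{X}=(\mathbf{x}_1^k,\dots,\mathbf{x}_B^k)$ to the update obtained after $E$ epochs. By the chain rule applied to the recursion $\theta_{t,e+1}^k=\theta_{t,e}^k-\tfrac{\eta}{B}\sum_b\nabla_\theta\ell(\theta_{t,e}^k,\mathbf{x}_b^k,y_b^k)$, the differential of $F$ at $\mathbf{X}$ is exactly the Jacobian $\mathbf{J}\in\mathbb{R}^{d\times Bp}$ named in the statement, so $DF(\mathbf{X})=\mathbf{J}$.

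Next, the rank hypothesis $\mathrm{rank}(\mathbf{J})<Bp$ is equivalent to $\mathbf{J}$ having a nontrivial null space, so I would pick $\delta\mathbf{X}=(\delta\mathbf{x}_1^k,\dots,\delta\mathbf{x}_B^k)\neq\mathbf{0}$ with $\mathbf{J}\,\delta\mathbf{X}=\mathbf{0}$. This immediately gives the first-order (directional-derivative) statement
\begin{equation*}
\frac{\der}{\der s}\,F(\mathbf{X}+s\,\delta\mathbf{X})\Big|_{s=0}=\mathbf{J}\,\delta\mathbf{X}=\mathbf{0},
\end{equation*}
which already establishes that the private data cannot be \emph{locally} uniquely determined. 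For the exact equality asserted in the theorem, I would invoke the implicit/constant-rank structure: if one assumes (as is implicit in writing ``the rank of $\mathbf{J}$ is $\mathrm{rank}(\mathbf{J})<Bp$'') that the rank is constant on a neighborhood, then the constant-rank theorem guarantees that the level set $F^{-1}(F(\mathbf{X}))$ is a smooth submanifold of dimension $Bp-\mathrm{rank}(\mathbf{J})>0$ through $\mathbf{X}$, so there genuinely exist $\{\delta\mathbf{x}_b^k\}\neq\{\mathbf{0}\}$ with $F(\mathbf{X}+\delta\mathbf{X})=F(\mathbf{X})$, which is exactly the claimed identity $\Delta\theta_t^k(\{\mathbf{x}_b^k+\delta\mathbf{x}_b^k,y_b^k\})=\Delta\theta_t^k(\{\mathbf{x}_b^k,y_b^k\})$.

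The main obstacle is the gap between the first-order (kernel) statement and the exact finite-perturbation equality: a single null vector of $\mathbf{J}$ only certifies that $F$ is stationary to first order, not that $F$ is truly constant along a curve. Closing this gap rigorously requires either the constant-rank hypothesis to apply the constant-rank theorem, or an argument that the underlying map is affine/multilinear in the features so that the Jacobian is globally constant and the first-order equality is in fact exact. I would therefore either (i) state the result with the constant-rank assumption made explicit and cite the constant-rank theorem, or (ii) if the intended reading is the weaker local/infinitesimal non-identifiability, reformulate the conclusion as the existence of an indistinguishable perturbation direction to first order. The remaining steps---verifying via the chain rule that $DF=\mathbf{J}$ and that dimension counting yields a positive-dimensional fiber---are routine once the differentiability of $\ell$ in its data argument is assumed.
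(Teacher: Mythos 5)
Your proposal follows essentially the same route as the paper's proof: regard $\Delta\theta_t^k$ as a differentiable map of the stacked features, identify its differential with the Jacobian $\mathbf{J}$, and use $\mathrm{rank}(\mathbf{J})<Bp$ to extract a nonzero null vector $\delta\mathbf{X}$ along which the update is unchanged to first order. Where you differ is in your explicit treatment of the gap between the first-order statement $\mathbf{J}\,\delta\mathbf{X}=\mathbf{0}$ and the exact equality $\Delta\theta_t^k(\mathbf{X}+\delta\mathbf{X})=\Delta\theta_t^k(\mathbf{X})$ claimed in the theorem. The paper's own proof does not close this gap: it derives only $\Delta\theta_t^k(\{\mathbf{x}_b^k+\delta\mathbf{x}_b^{k*},y_b^k\})\approx\Delta\theta_t^k(\{\mathbf{x}_b^k,y_b^k\})$ from the Taylor expansion and then silently promotes the approximation to an equality. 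Your observation that a single null vector of $\mathbf{J}$ certifies only stationarity to first order, not constancy along a curve, is correct, and your proposed remedies --- either adding a constant-rank hypothesis so the level set $F^{-1}(F(\mathbf{X}))$ is a positive-dimensional submanifold, or weakening the conclusion to local/infinitesimal non-identifiability --- are the right ways to make the statement rigorous. In that sense your write-up is more careful than the paper's argument on the one step that actually needs justification; the rest (chain rule, null-space extraction, dimension count) matches the paper.
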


\begin{proof}

For any batch of data $\{ \mathbf{x}_b^k, y_b^k \}_{b=1}^B$, consider its perturbation $\{ \mathbf{x}_b^k + \delta \mathbf{x}_b^k, y_b^k \}_{b=1}^B$. For a multivariate function \( f(\mathbf{x}) \), its first-order Taylor expansion around \( \mathbf{x}_0 \) can be written as:

\[ f(\mathbf{x}) \approx f(\mathbf{x}_0) + \nabla f(\mathbf{x}_0)^\top (\mathbf{x} - \mathbf{x}_0) ,\]
where \( \nabla f(\mathbf{x}_0) \) is the gradient vector of function \( f \) at point \( \mathbf{x}_0 \).

The model update \( \Delta \theta_t^k \) can be seen as a function of the batch data \( \{ \mathbf{x}_b^k \}_{b=1}^B \). We wish to estimate the impact of changes in batch data \( \{ \delta \mathbf{x}_b^k \}_{b=1}^B \) on the model update.

When the training sample \( \mathbf{x}_b^k \) undergoes a small perturbation \( \delta \mathbf{x}_b^k \), the change in the objective function can be approximated by a first-order Taylor expansion:

\begin{align*}
\Delta \theta_{t}^k(\{\mathbf{x}_b^k + \delta\mathbf{x}^k_b, y_b^k\}_{b=1}^B)
&\approx \Delta \theta_{t}^k(\{\mathbf{x}_b^k, y_b^k\}_{b=1}^B) + (\nabla_{\mathbf{x}} \Delta \theta_{t}^k(\{\mathbf{x}_b^k, y_b^k\}_{b=1}^B))^\top \delta\mathbf{x}^k \\
& = \Delta \theta_{t}^k(\{\mathbf{x}_b^k, y_b^k\}_{b=1}^B) + \begin{bmatrix}
\frac{\partial \Delta \theta_{t}^k}{\partial \mathbf{x}_{1}^k} & \cdots & \frac{\partial \Delta \theta_{t}^k}{\partial \mathbf{x}_{B}^k} \end{bmatrix} \delta\mathbf{x}^k\\
& = \Delta \theta_t^k(\{\mathbf{x}_b^k,y_b^k\}_{b=1}^B) + \mathbf{J} \delta\mathbf{x}^k,
\end{align*}
where $\frac{\partial \Delta \theta_{t}^k}{\partial \mathbf{x}_{b}^k} \in \mathbb{R}^{d \times p}$ represents the Jacobian matrix block of the model update $\Delta \theta_t^k$ with respect to the $b$-th data point $\mathbf{x}_b^k$, $d$ is the dimension of the model parameters, and $p$ is the dimension of the input, $\mathbf{J} =
\begin{bmatrix}
\frac{\partial \Delta \theta_{t}^k}{\partial \mathbf{x}_{1}^k} & \cdots & \frac{\partial \Delta \theta_{t}^k}{\partial \mathbf{x}_{B}^k} \end{bmatrix} 
$\( \in \mathbb{R}^{d \times Bp} \) is the Jacobian matrix of \( \Delta \theta_t^k \) with respect to \( \{ \mathbf{x}_b^k \}_{b=1}^B \), with each row corresponding to an element in \( \Delta \theta_t^k \) and each column corresponding to a feature in \( \{ \mathbf{x}_b^k \}_{b=1}^B \), and $\delta\mathbf{x}^k = \begin{bmatrix} \delta\mathbf{x}^k_1 \\ \delta\mathbf{x}^k_2 \\ \vdots \\ \delta\mathbf{x}^k_B \end{bmatrix} \in \mathbb{R}^{Bp}$ is the perturbation vector of all training sample inputs.

The first-order Taylor expansion result tells us that when the training sample input undergoes a small perturbation, the change in the objective function can be approximated by a linear combination of the original objective function value and the Jacobian matrix \( \mathbf{J} \). Intuitively, the first-order Taylor approximation tells us that when the change in batch data $\{\delta\mathbf{x}^k_b\}_{b=1}^B$ is small, the change in the model update amount can be calculated by the product of the Jacobian matrix $\mathbf{J}$ and the batch data change vector $\delta\mathbf{x}^k$.

Since the rank of the Jacobian matrix \( \mathbf{J} \) is \( \mathrm{rank}(\mathbf{J}) < Bp \), where \( B \) is the number of training samples and \( p \) is the feature dimension of each sample, this means that the matrix \( \mathbf{J} \) is not full rank. In other words, its column vectors are not linearly independent, and there are some linearly dependent parts. This result implies that the null space of matrix \( \mathbf{J} \), \( \mathrm{Ker}(\mathbf{J}) \), must contain non-zero vectors. The null space \( \mathrm{Ker}(\mathbf{J}) \) is defined as:

\[ \mathrm{Ker}(\mathbf{J}) = \{ \delta \mathbf{x} \in \mathbb{R}^{Bp} : \mathbf{J} \delta \mathbf{x} = \mathbf{0} \} \]

This means that the null space \( \mathrm{Ker}(\mathbf{J}) \) contains all non-zero vectors \( \delta \mathbf{x} \) that satisfy \( \mathbf{J} \delta \mathbf{x} = \mathbf{0} \). These vectors represent input perturbations \( \delta \mathbf{x} \) that do not change the value of the objective function \( \Delta \theta_t^k \).

Intuitively, since \( \mathbf{J} \) is not full rank, there are some input perturbations \( \delta \mathbf{x} \) that do not affect the objective function. The null space of the Jacobian matrix corresponds to those directions of batch data perturbation that do not affect the model update.

Let \( \delta \mathbf{x}^{k*} \) be any non-zero vector, then \( \mathbf{J} \delta \mathbf{x}^{k*} = \mathbf{0} \). Therefore:

\begin{align*}
 &\Delta \theta_t^k(\{ \mathbf{x}_b^k + \delta \mathbf{x}_b^{k*}, y_b^k \}_{b=1}^B)\\ 
 &\approx \Delta \theta_t^k(\{ \mathbf{x}_b^k, y_b^k \}_{b=1}^B) + \mathbf{J} \delta \mathbf{x}^{k*}\\ 
&= \Delta \theta_t^k(\{ \mathbf{x}_b^k, y_b^k \}_{b=1}^B) 
\end{align*}

This indicates that the perturbed batch data $\{ \mathbf{x}_b^k + \delta \mathbf{x}_b^{k*}, y_b^k \}_{b=1}^B$ and the original batch data $\{ \mathbf{x}_b^k, y_b^k \}_{b=1}^B$ will yield almost the same model update after the same training process.

This means that there are non-zero \( \{ \delta \mathbf{x}_b^k \}_{b=1}^B \) such that the model update \( \Delta \theta_t^k \) remains unchanged, i.e.:

\[ \Delta \theta_t^k(\{ \mathbf{x}_b^k + \delta \mathbf{x}_b^k, y_b^k \}_{b=1}^B) = \Delta \theta_t^k(\{ \mathbf{x}_b^k, y_b^k \}_{b=1}^B) \]

Since \( \delta \mathbf{x}^{k*} \neq \mathbf{0} \), the perturbed batch data is different from the original batch data, but they correspond to the same model update. This means that the model update \( \Delta \theta_t^k \) cannot uniquely determine the private batch data $\{ \mathbf{x}_b^k, y_b^k \}_{b=1}^B$, as there are different batches of data that can produce the same \( \Delta \theta_t^k \).

Therefore, under the condition \( \mathrm{rank}(\mathbf{J}) < Bp \), the private batch data has a certain level of non-identifiability, ensuring privacy protection. 

\end{proof}

Intuitively, \pref{thm: not_uniquely_determined} reveals the relationship between the rank of the Jacobian matrix and the uniqueness of the solution to the batch data identification problem. When small changes in batch data cause redundant degrees of freedom in the changes of the model update, different batches of data may correspond to the same model update, making the batch data indeterminable. By fully utilizing this, batch data privacy can be ensured by limiting the rank of the Jacobian matrix.

Since
\begin{equation*}
\Delta \theta_{t}^k 
= -\frac{\eta}{B} \sum_{e=0}^{E-1} \sum_{b=1}^B \nabla_\theta \ell(\theta_{t,e}^k, \mathbf{x}_b^k, y_b^k) \\
= -\frac{\eta E}{B} \sum_{b=1}^B \mathbf{g}_b^k,
\end{equation*}
where $\mathbf{g}_b^k = \frac{1}{E} \sum_{e=0}^{E-1} \nabla_\theta \ell(\theta_{t,e}^k, \mathbf{x}_b^k, y_b^k)$ is the average gradient of data $(\mathbf{x}_b^k, y_b^k)$, the Jacobian matrix can be written as:

\[ \mathbf{J} = -\frac{\eta E}{B} 
\begin{bmatrix}
\nabla_{\mathbf{x}_1^k} \mathbf{g}_1^k & \cdots & \nabla_{\mathbf{x}_B^k} \mathbf{g}_B^k \end{bmatrix} 
 \in \mathbb{R}^{d \times Bp} .\]
where $\nabla_{\mathbf{x}_b^k} \mathbf{g}_b^k = \begin{bmatrix} \frac{\partial \mathbf{g}_b^k}{\partial x_{b,1}^k} & \cdots & \frac{\partial \mathbf{g}_b^k}{\partial x_{b,p}^k} \end{bmatrix} \in \mathbb{R}^{d \times p}$. 

Therefore, $\mathrm{rank}(\mathbf{J}) \leq \min\{d, Bp\}$.

When the batch size $B$ satisfies $d < Bp$, $\mathrm{rank}(\mathbf{J}) < Bp$ always holds, and the private batch data cannot be uniquely determined. This leads to the following theorem:

\begin{thm}\label{thm: quantitative_dbp}
In horizontal federated learning, if the batch size $B$ satisfies that $d < Bp$, where $p$ is the dimension of a single data point and $d$ is the model parameter dimension, then the server cannot uniquely determine the private batch data $\{ \mathbf{x}_b^k, y_b^k \}_{b=1}^B$ from the model update $\Delta \theta_{t}^k$, thus ensuring privacy.
\end{thm}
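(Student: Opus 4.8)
The plan is to obtain this statement as an immediate corollary of \pref{thm: not_uniquely_determined}: that theorem already guarantees non-identifiability of the batch data whenever the Jacobian $\mathbf{J} \in \mathbb{R}^{d \times Bp}$ is column-rank deficient, i.e. $\mathrm{rank}(\mathbf{J}) < Bp$. Hence the only thing left to do is to verify that the hypothesis $d < Bp$ forces exactly this rank deficiency, after which the conclusion transfers verbatim.

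First I would invoke the elementary bound that the rank of a matrix never exceeds either of its two dimensions, so for $\mathbf{J} \in \mathbb{R}^{d \times Bp}$ we have $\mathrm{rank}(\mathbf{J}) \le \min\{d, Bp\}$ (this is precisely the inequality derived from the block form $\mathbf{J} = -\tfrac{\eta E}{B}[\nabla_{\mathbf{x}_1^k}\mathbf{g}_1^k \;\cdots\; \nabla_{\mathbf{x}_B^k}\mathbf{g}_B^k]$ recorded just above the statement). Under the assumption $d < Bp$, this yields $\mathrm{rank}(\mathbf{J}) \le d < Bp$, so $\mathbf{J}$ cannot have full column rank. Crucially, this chain of inequalities holds for every admissible batch $\{\mathbf{x}_b^k, y_b^k\}_{b=1}^B$ and every value of $\theta_t$, since it depends only on the shape $d \times Bp$ of $\mathbf{J}$ and not on its entries.

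With $\mathrm{rank}(\mathbf{J}) < Bp$ established, I would then apply \pref{thm: not_uniquely_determined} directly to produce a nonzero $\{\delta \mathbf{x}_b^k\}_{b=1}^B \in \mathrm{Ker}(\mathbf{J})$ for which the perturbed batch $\{\mathbf{x}_b^k + \delta \mathbf{x}_b^k, y_b^k\}_{b=1}^B$ induces the same model update $\Delta \theta_t^k$, so the server cannot recover the batch uniquely. There is really no substantive obstacle here; the entire content of the result is the dimension count $d < Bp \Rightarrow \mathrm{rank}(\mathbf{J}) < Bp$. The one point I would state with some care is that the bound $\mathrm{rank}(\mathbf{J}) \le d$ is uniform over all batches, so $d < Bp$ is a genuine \emph{sufficient} condition on the batch size $B$ that guarantees privacy in the worst case, rather than a statement about a particular or merely typical batch.
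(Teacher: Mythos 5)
Your proposal is correct and takes essentially the same route as the paper: both deduce $\mathrm{rank}(\mathbf{J}) \le d < Bp$ from the shape of $\mathbf{J} \in \mathbb{R}^{d \times Bp}$ and then invoke Theorem~\ref{thm: not_uniquely_determined} to conclude non-identifiability. Your added observation that the rank bound is uniform over all batches (depending only on the dimensions, not the entries, of $\mathbf{J}$) is a reasonable clarification but does not alter the argument.
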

\begin{proof}
When $d < Bp$, the rank of the Jacobian matrix $\mathbf{J} \in \mathbb{R}^{d \times Bp}$ is $\mathrm{rank}(\mathbf{J}) \leq d < Bp$. According to \pref{thm: not_uniquely_determined}, the private batch data cannot be uniquely determined from $\Delta \theta_{t}^k$, and privacy is protected.
\end{proof}

This theorem provides a sufficient condition for the batch size to prevent data reconstruction attacks. Intuitively, the product of the batch size and the data dimension should be much larger than the model parameter dimension, making the information about private data in $\Delta \theta_{t}^k$ incomplete. While increasing the batch size may improve model performance, it also brings greater privacy risks.

\section{Theoretical Analysis from the Perspective of Optimization Theory}\label{sec:average_case_optimization}

In this section, we provide a theoretical analysis of the relationship between privacy leakage and various factors in federated learning, including the number of local data samples, the number of local epochs, and the batch size. We measure the privacy leakage using the discrepancy between the reconstructed data and the original data, as defined in Definition \ref{defi:Privacy_Leakage}. We also introduce the concept of distortion extent in Definition \ref{defi: distortion_extent}, which quantifies the difference between the gradients computed on the original parameter and the distorted parameter.

For the semi-honest attacker, the privacy leakage is measured using the discrepancy between the reconstructed data and the original data \citep{zhang2023theoretically, zhang2023probably, zhang2024unified}. The privacy leakage is defined as follows.
\begin{definition}[Privacy Leakage]
\label{defi:Privacy_Leakage}
Let $\tilde{\mathbf{x}}_{t,i}^{k}$ represent the $i$-th data sample reconstructed by the attacker at round $t$ for client $k$, and $\mathbf{x}_{i}^{k}$ represent the $i$-th original data sample. The privacy leakage is measured as:
\begin{align}
\epsilon_p^{k} = 1 - \mathbb{E}\left[\frac{1}{|\mathcal{D}^{k}|}\sum_{i = 1}^{|\mathcal{D}^{k}|}\frac{1}{T}\sum_{t = 1}^T \frac{||\tilde{\mathbf{x}}_{t,i}^{k} - \mathbf{x}_{i}^{k}||}{D}\right],
\end{align}
where $D$ is a constant satisfying $||\tilde{\mathbf{x}}_{t,i}^{k} - \mathbf{x}_{i}^{k}||\le D$, and the expectation is taken over the randomness in the local dataset $\mathcal{D}^{k}$.
\end{definition}

\begin{definition}[Distortion Extent]\label{defi: distortion_extent}
Let $\mathbf{g}(\mathbf{x})$ and $\mathbf{g}(\tilde{\mathbf{x}})$ be the gradients computed on the original data $\mathbf{x}$ and the distorted data $\tilde{\mathbf{x}}$, respectively. The distortion extent is defined as:
\begin{align}
\Delta = \left\|\mathbf{g}(\mathbf{x}) - \mathbf{g}(\tilde{\mathbf{x}})\right\|,
\end{align}
where $\|\cdot\|$ is the Euclidean norm. 

\end{definition}
\textbf{Remark:} Assuming general applicability, we posit that $\Delta\le 1$.

\begin{assumption}[Bi-Lipschitz Condition of Gradients]\label{assump: two-sided Lipschitz}
Let $\|\mathbf{x}_1 - \mathbf{x}_2\|\le D$ for any two data samples $\mathbf{x}_1$ and $\mathbf{x}_2$. We assume that their gradients satisfy the bi-Lipschitz condition with positive constants $c_a$ and $c_b$ \citep{royden1968real} as follows:
\begin{equation}
c_a \|\nabla \ell(\theta, \mathbf{x}_1, \mathbf{y}) - \nabla \ell(\theta, \mathbf{x}_2, \mathbf{y})|\le \|\mathbf{x}_1 - \mathbf{x}_2\|\le c_b \|\nabla \ell(\theta, \mathbf{x}_1, \mathbf{y}) - \nabla \ell(\theta, \mathbf{x}_2, \mathbf{y})\|.
\end{equation}
\end{assumption}
\textbf{Remark:} In general, this assumption ensures the smoothness of gradients. For simplicity, we rewrite the bi-Lipschitz condition by placing $\|\mathbf{x}_1 - \mathbf{x}_2\|$ in the middle of the inequality.
\begin{assumption}[Self-bounded Regret]\label{assump: bounds_for_optimization_alg}
Let $T$ represent the total number of learning rounds for the semi-honest attacker. We assume that its regret bound $\Theta(T^{1/2})$ satisfies $c_0\cdot T^{1/2} \le \sum_{t = 1}^T \|\nabla \ell(\theta, \mathbf{x}_t, y_t) - \nabla \ell(\theta, \tilde{\mathbf{x}}, {Y}_t)\| \triangleq \Theta(T^{1/2}) \le c_2\cdot T^{1/2}$, where $c_0$ and $c_2$ are positive constants, $\mathbf{x}_t$ is the data reconstructed by the attacker at round $t$, and $\tilde{\mathbf{x}}$ is the data satisfying $\nabla \ell(\theta, \tilde{\mathbf{x}}, y) = \tilde{\mathbf{g}}$.
\end{assumption}
\textbf{Remark:} This assumption reflects the realistic scenario where the attacker employs an optimization algorithm with a near-optimal regret bound. Many well-known gradient-based optimizers, such as AdaGrad \citep{duchi2011adaptive} and Adam \citep{kingma2014adam}, achieve a regret bound of $\Theta(T^{1/2})$, which matches the lower bound for online convex optimization. This indicates that the attacker can effectively minimize the gradient mismatch between the reconstructed data and the target data through an asymptotically optimal gradient-based learning process. The specific constants $c_0$ and $c_2$ in the assumption capture the dependence of the regret bound on the problem parameters, such as the data dimension and the smoothness of the loss function. This assumption allows us to analyze the performance of the overall defense mechanism against such a powerful attacker who can leverage state-of-the-art optimization techniques to accurately reconstruct the target data.

\begin{lem}[Chernoff-Hoeffding Inequality]
\label{lem:hoeffdingbound}
Let $X_1, X_2, \dots, X_T$ be i.i.d. random variables supported on $[0, 1]$. For any positive number $\epsilon$, we have:
\begin{align}
\Pr\left(\left|\frac{1}{T} \sum_{t = 1}^T X_t - \mathbb{E}\left[\frac{1}{T} \sum_{t = 1}^T X_t\right]\right| \ge \epsilon\right) \le 2 \exp(-2T\epsilon^2).
\end{align}
\end{lem}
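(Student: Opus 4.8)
The plan is to prove this classical concentration inequality via the exponential moment method (Chernoff bounding), with the crucial input being Hoeffding's lemma on the moment generating function of a bounded random variable. Write $S_T = \sum_{t=1}^T X_t$ and let $\mu = \E[X_t]$ denote the common mean, so that $\E[S_T/T] = \mu$. By a union bound over the two tails it suffices to bound each of $\Pr(S_T/T - \mu \ge \epsilon)$ and $\Pr(S_T/T - \mu \le -\epsilon)$ by $\exp(-2T\epsilon^2)$; and by a symmetry argument---replacing each $X_t$ with $1 - X_t$, which is again supported on $[0,1]$ and converts the lower tail into an upper tail---it is enough to treat the upper tail.

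For the upper tail I would fix a free parameter $s > 0$ and apply Markov's inequality to the nonnegative random variable $e^{s(S_T - T\mu)}$, obtaining
\[
\Pr(S_T - T\mu \ge T\epsilon) \le e^{-sT\epsilon}\,\E\bigl[e^{s(S_T - T\mu)}\bigr] = e^{-sT\epsilon}\prod_{t=1}^T \E\bigl[e^{s(X_t - \mu)}\bigr],
\]
where the factorization across the product uses independence. The centered variables $X_t - \mu$ have zero mean and lie in an interval of length one, so Hoeffding's lemma yields $\E[e^{s(X_t-\mu)}] \le e^{s^2/8}$ for each $t$. Substituting and minimizing the resulting exponent $-sT\epsilon + Ts^2/8$ over $s > 0$, the optimal choice is $s = 4\epsilon$, which gives the clean bound $\exp(-2T\epsilon^2)$.

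The main technical obstacle is Hoeffding's lemma itself. To establish it I would set $\psi(s) = \log \E[e^{sY}]$ for a zero-mean variable $Y$ supported on an interval $[a,b]$, noting that $\psi(0) = 0$ and $\psi'(0) = \E[Y] = 0$, and then control the second derivative: $\psi''(s)$ equals the variance of $Y$ under the exponentially tilted measure with density proportional to $e^{sY}$, and since any random variable supported on $[a,b]$ has variance at most $(b-a)^2/4$ (Popoviciu's inequality), we obtain $\psi''(s) \le (b-a)^2/4$ uniformly in $s$. A second-order Taylor expansion with Lagrange remainder then gives $\psi(s) \le s^2(b-a)^2/8$, which is precisely the required moment generating function bound, specializing to $e^{s^2/8}$ when $b - a = 1$. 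Once this lemma is in place, the remaining arithmetic is routine, and combining the two symmetric tail estimates through the union bound produces the factor of two, completing the proof.
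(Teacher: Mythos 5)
Your proof is correct. Note that the paper does not actually prove this lemma---it is stated as a classical result (Chernoff--Hoeffding) and used as a black box in the proof of Theorem~\ref{thm:privacy_leakage_upper_bound}---so there is no in-paper argument to compare against. What you give is the standard textbook derivation: Markov's inequality applied to the exponential moment, factorization by independence, Hoeffding's lemma via the second-derivative bound on the log-MGF together with Popoviciu's variance inequality, optimization at $s=4\epsilon$ yielding the exponent $-2T\epsilon^2$, and a union bound over the two tails for the factor of $2$; all the arithmetic checks out.
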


To facilitate our analysis, we make two assumptions. Assumption \ref{assump: two-sided Lipschitz} states that the gradients of any two data samples satisfy the bi-Lipschitz condition, which ensures the smoothness of gradients. Assumption \ref{assump: bounds_for_optimization_alg} assumes that the semi-honest attacker's optimization algorithm has a self-bounded regret, which is reasonable in practice as many classical gradient-based optimizers satisfy this property.

Here is a detailed theoretical analysis of the relationship between the upper bound of privacy leakage and the number of local data samples $n$, the number of local epochs $E$, and the batch size $B$ in federated learning. Recent work by \cite{zhang2023theoretically, zhang2023probably, zhang2024unified} has also provided upper bounds on the privacy leakage in federated learning. However, these prior analyses typically consider the scenario where the number of training epochs $E = 1$, and assume that the total number of training samples $n^k$ is equal to the batch size $B$.
In contrast, the current setting considers a more general case where the number of training epochs $E$ can be greater than 1, and the total number of training samples $n^k$ may not be equal to the batch size $B$. This generalization is important in practice, as it allows for more flexibility in the federated learning protocol and better captures real-world scenarios.

\begin{thm}[Upper Bound for Privacy Leakage]
\label{thm:privacy_leakage_upper_bound}
Let \pref{assump: two-sided Lipschitz} and \pref{assump: bounds_for_optimization_alg} hold. Assume that $\Delta^{k}\ge \frac{2 c_2\cdot c_b E}{c_a\sqrt{T}}$, where $c_a, c_2$ and $c_b$ are introduced in \pref{assump: two-sided Lipschitz} and \pref{assump: bounds_for_optimization_alg}. Let $n^{k}$ represent the number of local data samples of client $k$, $E$ represent the number of local epochs, and $B$ represent the batch size. Assume that the assumptions hold. With probability at least $1- \exp(-\mathrm{poly}(B))$, the privacy leakage of client $k$ is bounded by:
\begin{align}
\epsilon_p^{k} \le 1 + \sqrt{\frac{\ln 2 + \mathrm{poly}(B)}{2B}} - \frac{c_a}{2D} \cdot \Delta^{k},
\end{align}
where $\mathrm{poly}(B)$ is a polynomial function of $B$, $c_a$ and $D$ are constants defined in the assumptions, and $\Delta^{k}$ is the distortion extent of client $k$.
\end{thm}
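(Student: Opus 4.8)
The plan is to upper-bound $\epsilon_p^{k}$ by lower-bounding the expected normalized reconstruction error appearing in \pref{defi:Privacy_Leakage}. Writing $X_i := \frac{1}{T}\sum_{t=1}^T \frac{\|\tilde{\mathbf{x}}_{t,i}^{k} - \mathbf{x}_i^{k}\|}{D}$, which lies in $[0,1]$ by the constant $D$ of \pref{defi:Privacy_Leakage}, we have $\epsilon_p^{k} = 1 - \mathbb{E}\big[\frac{1}{|\mathcal{D}^{k}|}\sum_i X_i\big] = 1 - \mathbb{E}[X]$ by identical distribution of the samples. It therefore suffices to produce a high-probability lower bound on $\mathbb{E}[X]$. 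I would split the argument into (i) a deterministic lower bound on the empirical batch average $\frac{1}{B}\sum_{i=1}^B X_i$ in terms of the distortion extent $\Delta^{k}$, and (ii) a Chernoff–Hoeffding concentration step (\pref{lem:hoeffdingbound}) transferring the empirical average to its expectation.

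For the deterministic bound, fix a batch sample $i$ and round $t$. First apply the left-hand bi-Lipschitz inequality of \pref{assump: two-sided Lipschitz} to pass from data distance to gradient distance, $\|\tilde{\mathbf{x}}_{t,i}^{k} - \mathbf{x}_i^{k}\| \ge c_a \|\mathbf{g}(\tilde{\mathbf{x}}_{t,i}^{k}) - \mathbf{g}(\mathbf{x}_i^{k})\|$. Then insert the distorted gradient $\tilde{\mathbf{g}}$ and use the reverse triangle inequality,
\[
\|\mathbf{g}(\tilde{\mathbf{x}}_{t,i}^{k}) - \mathbf{g}(\mathbf{x}_i^{k})\| \ge \|\mathbf{g}(\mathbf{x}_i^{k}) - \tilde{\mathbf{g}}\| - \|\mathbf{g}(\tilde{\mathbf{x}}_{t,i}^{k}) - \tilde{\mathbf{g}}\| = \Delta^{k} - \|\mathbf{g}(\tilde{\mathbf{x}}_{t,i}^{k}) - \tilde{\mathbf{g}}\|,
\]
where the first term is exactly the distortion extent of \pref{defi: distortion_extent} (using $\mathbf{g}(\tilde{\mathbf{x}})=\tilde{\mathbf{g}}$) and the second is the attacker's residual gradient mismatch. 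Averaging over $t$ and invoking the self-bounded regret of \pref{assump: bounds_for_optimization_alg}, together with the $E$-epoch structure and the bi-Lipschitz conversion constants, bounds the averaged residual by $\frac{c_2 c_b E}{c_a\sqrt{T}}$. The hypothesis $\Delta^{k} \ge \frac{2 c_2 c_b E}{c_a\sqrt{T}}$ is precisely calibrated to force this residual to be at most $\Delta^{k}/2$, so that $\frac{1}{T}\sum_t \|\tilde{\mathbf{x}}_{t,i}^{k} - \mathbf{x}_i^{k}\| \ge \frac{c_a}{2}\Delta^{k}$; dividing by $D$ gives $X_i \ge \frac{c_a}{2D}\Delta^{k}$ for every $i$, hence $\frac{1}{B}\sum_{i=1}^B X_i \ge \frac{c_a}{2D}\Delta^{k}$.

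For the concentration step, I would treat $X_1,\dots,X_B$ as i.i.d.\ $[0,1]$-valued variables over the draw of the batch and apply \pref{lem:hoeffdingbound} with sample count $B$ and deviation $\epsilon = \sqrt{\frac{\ln 2 + \mathrm{poly}(B)}{2B}}$. This choice makes the failure probability $2\exp(-2B\epsilon^2) = \exp(-\mathrm{poly}(B))$, and on the complementary event $\mathbb{E}[X] \ge \frac{1}{B}\sum_i X_i - \epsilon$. Combining with the deterministic bound yields $\mathbb{E}[X] \ge \frac{c_a}{2D}\Delta^{k} - \sqrt{\frac{\ln 2 + \mathrm{poly}(B)}{2B}}$, and substituting into $\epsilon_p^{k} = 1 - \mathbb{E}[X]$ produces the claimed inequality.

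The main obstacle I anticipate is the constant bookkeeping in the residual bound: correctly propagating the number of epochs $E$ and the two bi-Lipschitz constants $c_a,c_b$ through the passage between data space and gradient space, so that the regret bound of \pref{assump: bounds_for_optimization_alg} yields exactly the threshold $\frac{2 c_2 c_b E}{c_a\sqrt{T}}$ needed for the factor-of-two reduction. A secondary point requiring care is justifying the i.i.d.\ and $[0,1]$-boundedness of the $X_i$ so that \pref{lem:hoeffdingbound} applies, and identifying the per-sample expectation with the dataset-average expectation appearing in \pref{defi:Privacy_Leakage}, which holds by linearity once the samples are identically distributed.
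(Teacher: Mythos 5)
Your proposal follows essentially the same route as the paper's proof: lower-bound the batch-averaged reconstruction error by inserting the data point that generates the distorted gradient, convert between data space and gradient space via \pref{assump: two-sided Lipschitz}, control the attacker's residual gradient mismatch with \pref{assump: bounds_for_optimization_alg}, use the hypothesis on $\Delta^{k}$ to absorb that residual into $\frac{c_a}{2D}\Delta^{k}$, and finish with \pref{lem:hoeffdingbound}. Two small discrepancies are worth flagging. First, you apply the $c_a$-side of the bi-Lipschitz condition \emph{before} the reverse triangle inequality, so both resulting terms live in gradient space and carry the single factor $c_a$; the paper instead applies the triangle inequality in data space and converts the two pieces separately, the distortion term with $c_a$ and the residual term with $c_b$. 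Consequently your residual bound does not naturally produce the $c_b/c_a$ factor in the threshold $\frac{2c_2 c_b E}{c_a\sqrt{T}}$ --- your version only needs $\Delta^{k}\ge \frac{2c_2 E}{\sqrt{T}}$, which the stated hypothesis implies since $c_a\le c_b$, so nothing breaks, but the constant you quote is reverse-engineered rather than derived, exactly the bookkeeping risk you anticipate. Second, you assert that $\|\mathbf{g}(\mathbf{x}_i^{k})-\tilde{\mathbf{g}}\|$ is ``exactly'' the distortion extent for each sample $i$ and conclude $X_i\ge \frac{c_a}{2D}\Delta^{k}$ pointwise; in the paper's proof $\Delta^{k}$ is the norm of the \emph{batch-averaged} gradient difference, so the identification only holds after averaging over $i$ and applying Jensen's inequality ($\frac{1}{B}\sum_i\|a_i\|\ge\|\frac{1}{B}\sum_i a_i\|$). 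The per-sample claim is therefore not justified, but since only the batch average $\frac{1}{B}\sum_i X_i$ is used downstream, the argument is repaired by inserting that one averaging step, after which your proof coincides with the paper's.
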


This theorem provides insights into the factors that affect the privacy leakage in federated learning. It shows that increasing the batch size $B$ or the distortion extent $\Delta^{k}$ can reduce the upper bound of privacy leakage. The number of local data samples $n^{k}$ and the number of local epochs $E$ do not directly appear in the bound, but they implicitly affect the privacy leakage since $n^{k}$ affects the batch size with the number of batches fixed, and $E$ appears in the assumption on the attacker's optimization algorithm.

The significance of this theorem lies in its quantitative characterization of the relationship between privacy leakage and various factors in federated learning. It provides a theoretical foundation for understanding the impact of these factors on privacy and can guide the design of privacy-preserving federated learning algorithms. By carefully tuning the batch size, the number of local data samples, and the number of local epochs, one can potentially achieve a better trade-off between privacy and utility in federated learning.

\begin{proof}
We denote $\mathcal{D}_{\mathrm{local}}^{k}$ as the local dataset of client $k$ and $\mathcal{D}_{\mathrm{batch}}^{k}$ as a randomly sampled batch from $\mathcal{D}_{\mathrm{local}}^{k}$. The size of $\mathcal{D}_{\mathrm{local}}^{k}$ is $n^{k}$, and the size of $\mathcal{D}_{\mathrm{batch}}^{k}$ is $B$. The total number of batches in one local epoch is $\lceil \frac{n^{k}}{B} \rceil$.
Let $\mathbf{x}_{t,i}^{k}$ represent the $i$-th data sample of client $k$ that is reconstructed by the attacker at the $t$-th round of the optimization algorithm, and ${\mathbf{x}}_{i}^{k}$ represent the $i$-th original data sample of client $k$.

We begin by considering the term, $\frac{1}{B}\sum_{i = 1}^{B}\frac{1}{T}\sum_{t = 1}^T ||\mathbf{x}_{t,i}^{k} - {\mathbf{x}}_{i}^{k}||$, which represents the average Euclidean distance between the reconstructed data samples $\mathbf{x}_{t,i}^{k}$ and the original data samples ${\mathbf{x}}_{i}^{k}$, averaged over all data samples in a batch (size $B$) and all optimization rounds (from $1$ to $T$). Then we have that

\begin{align}
&\frac{1}{B}\sum_{i = 1}^{B}\frac{1}{T}\sum_{t = 1}^T ||\mathbf{x}_{t,i}^{k} - {\mathbf{x}}_{i}^{k}|| \nonumber\\
&\ge \frac{1}{B}\sum_{i = 1}^{B}\frac{1}{T}\sum_{t = 1}^T ||\tilde{\mathbf{x}}_{i}^{k} - {\mathbf{x}}_{i}^{k}|| - \frac{1}{B}\sum_{i = 1}^{B}\frac{1}{T}\sum_{t = 1}^T ||\mathbf{x}_{t,i}^{k} - \tilde{\mathbf{x}}_{i}^{k}|| \label{eq:triangle_inequality} \\
&\ge \frac{c_a}{B}\sum_{i = 1}^{B}||\nabla \ell(\theta, \tilde{\mathbf{x}}_{i}^{k}, y_i^{k}) - \nabla \ell(\theta, {\mathbf{x}}_{i}^{k}, y_i^{k})|| \nonumber\\
&\quad - \frac{c_b}{T}\sum_{t = 1}^T\frac{1}{B}\sum_{i = 1}^{B}||\nabla \ell(\theta, \mathbf{x}_{t,i}^{k}, y_i^{k}) - \nabla \ell(\theta, \tilde{\mathbf{x}}_{i}^{k}, y_i^{k})||  \label{eq:assumptions} \\
&\ge c_a||\frac{1}{B}\sum_{i = 1}^{B} (\nabla \ell(\theta, \tilde{\mathbf{x}}_{i}^{k}, y_i^{k}) - \nabla \ell(\theta, {\mathbf{x}}_{i}^{k}, y_i^{k}))|| \nonumber\\
&\quad - \frac{c_b}{T}\sum_{t = 1}^T\frac{1}{B}\sum_{i = 1}^{B}||\nabla \ell(\theta, \mathbf{x}_{t,i}^{k}, y_i^{k}) - \nabla \ell(\theta, \tilde{\mathbf{x}}_{i}^{k}, y_i^{k})||  \label{eq:jensen_inequality} \\
&= c_a\cdot \Delta^{k} - \frac{c_b}{T}\sum_{t = 1}^T\frac{1}{B}\sum_{i = 1}^{B}||\nabla \ell(\theta, \mathbf{x}_{t,i}^{k}, y_i^{k}) - \nabla \ell(\theta, \tilde{\mathbf{x}}_{i}^{k}, y_i^{k})||  \label{eq:from_delta_defi} 
\end{align}
In \pref{eq:triangle_inequality}, we apply the triangle inequality that $||a - c|| \ge ||a - b|| - ||b - c||$. In \pref{eq:assumptions}, we use the assumptions that $||\tilde{\mathbf{x}}_{i}^{k} - {\mathbf{x}}_{i}^{k}|| \ge c_a||\nabla \ell(\theta, \tilde{\mathbf{x}}_{i}^{k}, y_i^{k}) - \nabla \ell(\theta, {\mathbf{x}}_{i}^{k}, y_i^{k})||$, and $||\mathbf{x}_{t,i}^{k} - \tilde{\mathbf{x}}_{i}^{k}|| \le c_b||\nabla \ell(\theta, \mathbf{x}_{t,i}^{k}, y_i^{k}) - \nabla \ell(\theta, \tilde{\mathbf{x}}_{i}^{k}, y_i^{k})||$.
In \pref{eq:jensen_inequality}, we apply Jensen's inequality: $\frac{1}{B}\sum_{i = 1}^{B} ||a_i|| \ge ||\frac{1}{B}\sum_{i = 1}^{B} a_i||$. In \pref{eq:from_delta_defi}, we use the definition that $\Delta^{k} = ||\frac{1}{B}\sum_{i = 1}^{B} (\nabla \ell(\theta, \tilde{\mathbf{x}}_{i}^{k}, y_i^{k}) - \nabla \ell(\theta, {\mathbf{x}}_{i}^{k}, y_i^{k}))||.$

Therefore, we have that
\begin{align}
&\frac{1}{B}\sum_{i = 1}^{B}\frac{1}{T}\sum_{t = 1}^T ||\mathbf{x}_{t,i}^{k} - {\mathbf{x}}_{i}^{k}|| \nonumber\\
&\ge c_a \cdot \Delta^{k} - \frac{c_b}{T}\sum_{t = 1}^T\frac{1}{B}\sum_{i = 1}^{B}||\nabla \ell(\theta, \mathbf{x}_{t,i}^{k}, y_i^{k}) - \nabla \ell(\theta, \tilde{\mathbf{x}}_{i}^{k}, y_i^{k})||,
\end{align}
where $\tilde{\mathbf{x}}_{i}^{k}$ is the $i$-th reconstructed data sample that generates the distorted gradient $\tilde{g}$, and $c_a$, $c_b$ are constants defined in the assumptions.

Now we bound the second term on the right-hand side of the inequality. Since the attacker runs the optimization algorithm for $T$ rounds and the model update in each round is computed based on $E$ local epochs, we have that
\begin{align}
&\frac{1}{T}\sum_{t = 1}^T\frac{1}{B}\sum_{i = 1}^{B}||\nabla \ell(\theta, \mathbf{x}_{t,i}^{k}, y_i^{k}) - \nabla \ell(\theta, \tilde{\mathbf{x}}_{i}^{k}, y_i^{k})|| \nonumber\\
&\le \frac{1}{T}\sum_{t = 1}^T\frac{1}{B}\sum_{i = 1}^{B}\sum_{e = 1}^{E}||\nabla \ell(\theta_{t,e}, \mathbf{x}_{t,i}^{k}, y_i^{k}) - \nabla \ell(\theta_{t,e}, \tilde{\mathbf{x}}_{i}^{k}, y_i^{k})|| \nonumber\\
&\le \frac{E}{T}\sum_{t = 1}^T\frac{1}{B}\sum_{i = 1}^{B}\max_{e \in {1, \ldots, E}}||\nabla \ell(\theta_{t,e}, \mathbf{x}_{t,i}^{k}, y_i^{k}) - \nabla \ell(\theta_{t,e}, \tilde{\mathbf{x}}_{i}^{k}, y_i^{k})||,
\end{align}
where $\theta_{t,e}$ represents the local model parameters after the $e$-th local epoch in the $t$-th round.
According to the assumptions, the attacker's optimization algorithm satisfies the self-bounded regret property, which means:
\begin{align}
\sum_{t = 1}^T\frac{1}{B}\sum_{i = 1}^{B}\max_{e \in {1, \ldots, E}}||\nabla \ell(\theta_{t,e}, \mathbf{x}_{t,i}^{k}, y_i^{k}) - \nabla \ell(\theta_{t,e}, \tilde{\mathbf{x}}_{i}^{k}, y_i^{k})|| \le c_2 \sqrt{T},
\end{align}
where $c_2$ is a constant.

Therefore, we have that
\begin{align}
&\frac{1}{T}\sum_{t = 1}^T\frac{1}{B}\sum_{i = 1}^{B}||\nabla \ell(\theta, \mathbf{x}_{t,i}^{k}, y_i^{k}) - \nabla \ell(\theta, \tilde{\mathbf{x}}_{i}^{k}, y_i^{k})|| \nonumber\\
&\le \frac{E c_2}{\sqrt{T}}.
\end{align}
Plugging this back into the previous inequality, we get:
\begin{align}
\frac{1}{B}\sum_{i = 1}^{B}\frac{1}{T}\sum_{t = 1}^T ||\mathbf{x}_{t,i}^{k} - {\mathbf{x}}_{i}^{k}|| \ge c_a \cdot \Delta^{k} - \frac{c_b E c_2}{\sqrt{T}}.
\end{align}
Using Hoeffding's inequality (Lemma \ref{lem:hoeffdingbound}), we have that with probability at least $1 - \exp(-2B\epsilon^2)$,
\begin{align}
\left|\frac{1}{B} \sum_{i = 1}^{B} \frac{1}{T}\sum_{t = 1}^T \frac{||\mathbf{x}_{t,i}^{k} - {\mathbf{x}}_{i}^{k}||}{D} - \mathbb{E}\left[\frac{1}{B} \sum_{i = 1}^{B} \frac{1}{T}\sum_{t = 1}^T \frac{||\mathbf{x}_{t,i}^{k} - {\mathbf{x}}_{i}^{k}||}{D}\right]\right| \le \epsilon,
\end{align}
where $D$ is a constant that bounds the distance between the reconstructed data and the original data.
Let $\epsilon = \sqrt{\frac{\ln 2 + \mathrm{poly}(B)}{2B}}$, where $\mathrm{poly}(B)$ is a polynomial function of $B$. Then, with probability at least $1 - \exp(-\mathrm{poly}(B))$, we have:
\begin{align}
&\frac{1}{B} \sum_{i = 1}^{B} \frac{1}{T}\sum_{t = 1}^T \frac{||\mathbf{x}_{t,i}^{k} - {\mathbf{x}}_{i}^{k}||}{D} \nonumber\\
&\le \mathbb{E}\left[\frac{1}{B} \sum_{i = 1}^{B} \frac{1}{T}\sum_{t = 1}^T \frac{||\mathbf{x}_{t,i}^{k} - {\mathbf{x}}_{i}^{k}||}{D}\right] + \sqrt{\frac{\ln 2 + \mathrm{poly}(B)}{2B}} \nonumber\\
&= 1 - \epsilon_p^{k} + \sqrt{\frac{\ln 2 + \mathrm{poly}(B)}{2B}},
\end{align}
where the last equality follows from the definition of privacy leakage $\epsilon_p^{k}$.

Combining the above results, we have that
\begin{align}
1 - \epsilon_p^{k} + \sqrt{\frac{\ln 2 + \mathrm{poly}(B)}{2B}} \ge \frac{c_a}{D} \cdot \Delta^{k} - \frac{c_b E c_2}{D\sqrt{T}}.
\end{align}
 
Rearranging the terms and using the assumption $\Delta^{k}\ge \frac{2 c_2\cdot c_b E}{c_a\sqrt{T}}$, we get that
\begin{align}
\epsilon_p^{k} \le 1 + \sqrt{\frac{\ln 2 + \mathrm{poly}(B)}{2B}} - \frac{c_a}{2D} \cdot \Delta^{k}.
\end{align}
\end{proof}

The upper bound of the privacy leakage $\epsilon_p^{k}$ of client $k$ depends on the batch size $B$, the distortion extent $\Delta^{k}$, and several constants. Note that the number of local data samples $n^{k}$ and the number of local epochs $E$ do not directly appear in the bound. However, they implicitly affect the bound through the batch size $B$ and the assumptions on the attacker's optimization algorithm.
Intuitively, the theorem suggests that:
\begin{itemize}
\item Increasing the batch size $B$ can reduce the upper bound of privacy leakage, as the term $\sqrt{\frac{\ln 2 + \mathrm{poly}(B)}{2B}}$ decreases with larger $B$. When $B$ goes to $\infty$, the upper bound goes to $0$.
\item Increasing the distortion extent $\Delta^{k}$ can also reduce the upper bound of privacy leakage, as the term $\frac{c_a}{2D} \cdot \Delta^{k}$ increases with larger $\Delta^{k}$.
\item The number of local data samples $n^{k}$ affects the privacy leakage through the batch size $B$. With a fixed $B$, a larger $n^{k}$ means more batches in each local epoch, which may provide more information to the attacker and potentially increase the privacy leakage.
\item The number of local epochs $E$ affects the privacy leakage through the assumptions on the attacker's optimization algorithm. If the attacker's algorithm exploits the increased number of local updates caused by more local epochs, it may lead to higher privacy leakage.
\end{itemize}

It's worth noting that the actual impact of $n^{k}$ and $E$ on privacy leakage may vary depending on the specific attack methods and the assumptions made. The provided theorem gives a general upper bound based on the stated assumptions, but the relationship between privacy leakage and these factors can be complex and requires further analysis in specific scenarios.

\section{Conclusion}
\label{sec:conclusion}
In this paper, we provided a theoretical analysis of privacy leakage in federated learning from two complementary perspectives: linear algebra and optimization theory. From the linear algebra perspective, we formulated the local training process as an optimization problem and examined the uniqueness of its solution. We proved that when the Jacobian matrix of the batch data is not full rank, there exist different batches of data that produce the same model update, thereby ensuring a level of privacy. We further derived a sufficient condition on the batch size to prevent data reconstruction attacks. From the optimization theory perspective, we measured the privacy leakage using the discrepancy between the reconstructed data and the original data, and established an upper bound on the privacy leakage in terms of the batch size, the distortion extent, and several other factors. 

Our analysis provided insights into the design of privacy-preserving federated learning algorithms and highlighted the impact of different factors on privacy leakage. First, increasing the batch size can reduce the upper bound of privacy leakage, as it increases the difficulty for the attacker to reconstruct the original data from the aggregated model updates. Second, increasing the distortion extent, which measures the difference between the gradients computed on the original parameter and the distorted parameter, can also reduce the upper bound of privacy leakage. Third, the number of local data samples and the number of local epochs have implicit effects on privacy leakage through their influence on the batch size and the assumptions on the attacker's optimization algorithm. By carefully tuning the batch size, the number of local data samples, and the number of local epochs, we can demonstrate the effectiveness of our proposed strategies for enhancing privacy protection in federated learning, and achieve a better trade-off between privacy and utility in federated learning.

There are still several open problems and challenges that require further investigation. The theoretical analysis in this paper is based on certain assumptions, such as the bi-Lipschitz condition of gradients and the self-bounded regret of the attacker's optimization algorithm. Relaxing these assumptions and extending the analysis to more general settings is an important direction for future work. Besides, the derived upper bound on privacy leakage is relatively loose and may not provide tight guarantees in practice. Developing sharper bounds and more precise characterizations of privacy leakage is an open challenge. Furthermore, our analysis can be extended to other types of privacy attacks beyond data reconstruction attacks and generalized to other variants of federated learning, such as vertical federated learning and federated transfer learning.

In conclusion, this work contributes to a better understanding of privacy risks in federated learning and provides a theoretical foundation for developing more secure and privacy-preserving federated learning algorithms. We hope that our findings will inspire future research in this important area and contribute to the development of trustworthy and privacy-preserving machine learning systems.

\bibliography{references}
\bibliographystyle{ACM-Reference-Format}

\newpage
\onecolumn
\appendix

\end{document}